\documentclass[reprint, superscriptaddress, aps, longbibliography, prl]{revtex4-2}

\usepackage{amsmath, amssymb, amsthm, times, mathtools, txfonts, overpic, graphicx, color}
\newcommand{\sslash}{\mathbin{/\mkern-6mu/}}
\newcommand{\K}{\mathbf{K}}
\newcommand{\M}{\mathbf{M}}
\newtheorem*{theorem}{Theorem}

\begin{document}

\title{Mutual Linearity of Complexes in Chemical Reaction Networks}

\author{Pedro E. Harunari}
\email{pedro.harunari@cnrs.fr}
\affiliation{Aix Marseille Université, CNRS, CINAM, Turing Center for Living Systems, 13288 Marseille, France}

\author{Guilherme Fiusa}
\affiliation{Department of Physics and Astronomy, University of Rochester, Rochester, New York 14627, USA}
\affiliation{University of Rochester Center for Coherence and Quantum Science, Rochester, New York 14627, USA}

\author{Matteo Polettini}
\affiliation{Via Leandro Alberti 93, 40139 Bologna, Italy}

\begin{abstract}
Exact statements about how a system far from equilibrium responds to changes in its rate constants are scarce beyond linear response, yet they are central to the control of biochemical networks. A notable exception is mutual linearity: in a jump process, controlling the rates of a single transition makes any two stationary probabilities obey an exact affine relation, arbitrarily far from equilibrium. With nonlinear mass-action kinetics, and conservation laws in place of a normalization, chemical reaction networks seem to defy mutual linearity since species concentrations obey no such relation. We show that it was there all along, in the complexes' activities (mass-action monomials) rather than in the concentrations of species. For zero-deficiency networks with a single linkage class, controlling one reaction makes any three stationary activities obey a linear relation whose coefficients are independent of the controlled rate constants, the conserved quantities (hence the initial state), and even of the stoichiometry of complexes. This predicts hidden activities and bounds the activity ratios the control can reach, while a violation signals nonzero deficiency or multiple linkage classes.
\end{abstract}

\maketitle


Nonequilibrium physics is, to a large extent, a theory of response: how the main observables of a system react to the forces that drive it~\cite{kuboStatisticalMechanicalTheoryIrreversible1957a, marconiFluctuationDissipationResponse2008}. Near equilibrium, response is governed by linear-response theory and the fluctuation--dissipation theorem, which tie the response to equilibrium fluctuations. They do not survive far from equilibrium, where response is instead characterized through few exact expressions and thermodynamic bounds~\cite{aslyamovNonequilibriumResponseMarkov2024, aslyamovGeneralTheoryStatic2024a, fernandesmartinsTopologicallyConstrainedFluctuations2023a, dechantFluctuationResponseInequality2020, ptaszynskiDissipationBoundsPrecision2024, chunTradeoffsNumberFluctuations2023}. Currents and concentrations depend nonlinearly on the control, response coefficients acquire their own dynamics, and the relations that make the near-equilibrium regime tractable become rare and, correspondingly, valuable. Where they do occur, they furnish inference schemes, model-discrimination tests, and benchmarks that hold regardless of how hard the system is driven. The question is also a practical one, to control a biochemical network is to know how its observables respond when some of its rate constants are changed~\cite{governOptimalResourceAllocation2014, qianThermodynamicKineticAnalysis2003, estradaInformationIntegrationEnergy2016, owenUniversalThermodynamicBounds2020, ilker2022shortcuts}. Controlling a single reaction while the rest of the network is left intact is a particularly relevant instance of this: it is how enzymes, inhibitors, and point mutations act. Metabolic control analysis is built on precisely this kind of reaction perturbation~\cite{kacser1973control, heinrich1974linear, Fell1992, reder1988metabolic}. Structural sensitivity analysis showed that the network alone determines which concentrations respond to such a perturbation, but not by how much~\cite{mochizuki2015sensitivity, okada2016law}. A sharper version of the question arises in adaptation, where a system must return an observable to its prestimulus value after a perturbation~\cite{Barkai1997, shinar2010structural}; whether retuning a single reaction can achieve this at all is a question of reachability. Stochastic thermodynamics~\cite{seifertStochasticThermodynamics2012, Shiraishi2023, Seifert2025, Peliti2021} and its extension to chemical reaction networks (CRNs)~\cite{polettiniEspositoIrreversible2014, raoEspositoNonequilibrium2016, avanziniMethodsConversations2024, schmiedl2007stochastic, avanzini2021noneq, tu2026noneq, esposito2020open} have made the search for such relations a central theme.

Answering such questions is hard in general. A change in a single reaction propagates through every cycle it belongs to, so each stationary concentration and current becomes a high-degree rational function of the controlled rate constants, entangled with all the other rate constants of the network. One would expect the responses of different observables to be unrelated, and any exact statement about them to require knowledge of the entire network or a perturbative regime. It therefore came as a surprise that this is not so. Some of us~\cite{harunariMutualLinearityNonequilibrium2024} proved that in a jump process (continuous-time Markov chain), controlling the forward and backward rates of a single edge makes any two stationary currents exactly linear-affine related, arbitrarily far from equilibrium. 
Since then, several works exploring mutual linearities have emerged~\cite{khodabandehlou2025affine, dalcengioMutualMultilinearityNonequilibrium2025b, polettiniCoplanarityRootedSpanningtree2025, zheng2026mutual, zhengMutualLinearityNonequilibrium2026, voitsGraphTheoreticDerivation2026b, floydLocalImperfectFeedback2025, bao2024nonequilibrium}.
In particular, Bebon and Speck~\cite{bebonMutualLinearityGeneric2026a} sharpened this to its most fundamental form: it is the stationary state probabilities themselves that are mutually linear, from which the linearity of currents and of a broad class of observables follows. 
The underlying mechanism is witnessed by applying the Markov chain tree theorem (MCTT)~\cite{schnakenbergNetworkTheory1976}, that expresses stationary probabilities as spanning-tree polynomials of the transition rates, revealing their mutual dependence.

These results belong to jump processes, whose dynamics are linear in the probability vector, and this vector is conveniently normalized. Chemistry is not so accommodating. Under mass-action kinetics~\cite{hornJacksonGeneralMassAction1972, feinbergFoundations2019} the dynamics of species concentrations are genuinely nonlinear and complexes are multimolecular. There is no a priori reason for mutual linearity to persist. Here we show that it does, with mutual linearity arising in complexes' activities. For so, we consider networks with a single linkage class and zero deficiency~\cite{feinbergFoundations2019, marehalliDeficiencyKinetic2023}, a topological--stoichiometric condition met by many biochemical models. Zero deficiency enforces complex balancing~\cite{hornJacksonGeneralMassAction1972, anderson2010product, feinberg1972complexbalancing, horn1972complex}, so the MCTT applies in the space of complexes. We prove that, when a single reaction is controlled, any three stationary complexes' activities obey an exact linear relation. Whereas in jump processes any two stationary probabilities are affinely related~\cite{bebonMutualLinearityGeneric2026a}, in CRNs the constant term, fixed by normalization, is replaced by a third activity:
\begin{align*}
    \text{jump processes:} \quad & p_a = \alpha\, p_b + \beta, \\
    \text{CRNs:} \quad & \psi_a = \alpha\, \psi_b + \beta\, \psi_c,
\end{align*}
where in both cases $\alpha$ and $\beta$ do not depend on the controlled rate constants, which we obtain in closed form. However large the network and however many rate constants it contains, three activities are always enough: two of them generically determine the third, arbitrarily far from equilibrium. The same coefficients also delimit the activity ratios that the control can reach, which settles the reachability question raised above.


\textit{Setup.} A CRN is built from species that react through $n_c$ complexes, the multisets of species entering a reaction as reactants or products. The stoichiometry is carried by the composition matrix $\mathbf{Y}$, whose element $Y_{ic}$ represents the number of the $i$-th species in complex $c$. Reactions are transitions between complexes; we take them all bidirectional, so the complexes are the vertices of an undirected graph whose connected components are the $\ell$ linkage classes.

Under mass-action kinetics the state is the vector of species concentrations $\mathbf{x}$, and each complex carries an activity $\psi_c(\mathbf{x})=\prod_i x_i^{Y_{ic}}$. This activity is the mass-action product of the concentrations of its constituent species, also called the mass-action monomial. We stress that $\psi_c$ is an activity, not a concentration: only for a single-species complex it reduces to a genuine concentration. Collecting them into $\boldsymbol{\psi}$, the dynamics read
\begin{equation}\label{eq:dynamics}
  \dot{\mathbf{x}} = \mathbf{Y}\mathbf{K}\,\boldsymbol{\psi} = \mathbb{S} \boldsymbol j,
\end{equation}
where $\mathbb{S}$ is the stoichiometric matrix, $\boldsymbol j$ are the reaction currents, $\mathbf{K}$ is the Laplacian on the space of complexes, weighted by the rate constants (off-diagonal $K_{cc'}$ is the rate constant of $c'\to c$ and columns sum to zero), i.e.\ exactly the generator of a continuous-time random walk on the complexes.

The interplay of the two ambient spaces, species and complexes, is measured by the deficiency~\cite{feinberg1987chemical}
\begin{equation}\label{eq:deficiency}
  \delta = n_c - \ell - s \ge 0,
\end{equation}
with $s=\operatorname{rank} \mathbb{S}$ the dimension of the stoichiometric subspace~\cite{feinbergFoundations2019, marehalliDeficiencyKinetic2023}. Deficiency is a nonnegative integer fixed by topology and stoichiometry alone, independent of the rate constants. We restrict to $\delta=0$, where $\ker\mathbf{Y}\mathbf{K}=\ker\mathbf{K}$. Every stationary state is then complex-balanced~\cite{hornJacksonGeneralMassAction1972, anderson2010product}, $\mathbf{K}\,\boldsymbol{\psi}=0$, so stationarity in species space, $\dot{\mathbf{x}}=0$, collapses onto the kernel of the complex-space Laplacian.

Assuming irreducibility of the space of complexes, there is a single linkage class, and $\ker\mathbf{K}$ is one-dimensional. Under these conditions, we can apply the MCTT~\cite{schnakenbergNetworkTheory1976}, which holds for any Laplacian on a connected graph~\cite{Mirzaev2013Laplacian}, and obtain the direction of the stationary state as
\begin{equation}\label{eq:MCTT}
  \boldsymbol{\psi} \propto \boldsymbol{\tau}, \qquad \tau_c = \sum_{\mathcal{T}_c}\omega(\mathcal{T}_c),
\end{equation}
where $\tau_c$ is the sum, over all spanning trees $\mathcal{T}_c$ rooted at $c$ (edges oriented towards $c$), of the weights $\omega(\mathcal{T}_c)$, each the product of the rate constants along the tree. The actual activity of each complex can be expressed as $\psi_c = \tau_c / Z$, where $Z$ is a common denominator to all complexes set by conservation laws and the initial state. In a jump process, normalization would give $Z=\sum_c \tau_c$; in a CRN it does not. For instance, $2X \xrightleftharpoons[k_-]{k_+} Y$ gives $(x^2, y) = (k_-,k_+)/Z$, which leaves $Z$ undetermined until the conserved quantity $x+2y$, fixed by the initial state, selects it, whereas a jump process with the same rates would have $Z=k_-+k_+$. The stationary complexes' activities are thus proportional to spanning-tree polynomials, the very object behind mutual linearity in jump processes, now living one level up, in the space of complexes.

Finally, we control a single reaction, considering its forward and backward rate constants $r_\pm$ as free parameters. Importantly, we do not assume a mere perturbation of their values nor restrict to near equilibrium conditions. We require only that the controlled reaction is bidirectional and not a bridge, i.e. its removal must leave the space of complexes connected.


\textit{Mutual linearity.} Fix the controlled reaction and let $r_\pm$ be its two directed rate constants. In a rooted spanning tree each edge is either absent or traversed in a single direction, so the controlled reaction contributes to $\omega(\mathcal{T}_c)$ either $r_+$, or $r_-$, or nothing at all: the tree polynomials $\boldsymbol{\tau}$ are at most linear in $r_\pm$ and never contain both. This splits the tree vector into three,
\begin{equation}\label{eq:tau_decomp}
  \boldsymbol{\tau}(r_\pm) = \boldsymbol{\tau}^{\backslash r_\pm} + \boldsymbol{\tau}^{\backslash r_- \sslash r_+} (r_+) + \boldsymbol{\tau}^{\backslash r_+ \sslash r_-} (r_-),
\end{equation}
collecting respectively the trees that avoid the controlled reaction, those that avoid $r_-$ but pass through $r_+$, and those that avoid $r_+$ but pass through $r_-$.

Our proof relies on the following: with tree surgery, Ref.~\cite{polettiniCoplanarityRootedSpanningtree2025} showed that the vectors $( \tau_c^{\backslash r_\pm}, \tau_c^{\backslash r_- \sslash r_+}, \tau_c^{\backslash r_+ \sslash r_-} )$, one for each complex $c$, are coplanar whenever the controlled reaction is not a bridge. Intuitively, removing the controlled reaction leaves a connected network whose stationary state is $\boldsymbol\tau^{\backslash r_\pm}$. On this network, the controlled reaction acts as a source--sink pair, injecting a current at one endpoint and withdrawing it at the other, and the trees through $r_+$ and through $r_-$ describe the responses to injections in opposite directions. In this picture, coplanarity states that a suitably weighted combination of the two responses, in which the opposite injections balance exactly, is the stationary state of the uncontrolled network.

\begin{theorem}[Mutual linearity of complexes' activities]
Consider a mass-action CRN with bidirectional reactions, zero deficiency and a single linkage class, and let $r_\pm$ be the rate constants of a reaction that is not a bridge, i.e.\ whose removal leaves the space of complexes connected. Then, for any three complexes $a$, $b$, $c$ and all values of $r_\pm$, the positive stationary activities satisfy
\begin{equation}\label{eq:mutuallinearity}
    \Delta_{bc} \psi_a + \Delta_{ca} \psi_b + \Delta_{ab} \psi_c = 0,
\end{equation}
with coefficients $\Delta$ (defined in the proof) that depend on neither $r_\pm$, the composition matrix $\mathbf Y$, the conservation laws, nor the initial state.
\end{theorem}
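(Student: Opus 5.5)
We will combine three facts established above: complex balancing under zero deficiency, the MCTT representation \eqref{eq:MCTT}, and the coplanarity result of Ref.~\cite{polettiniCoplanarityRootedSpanningtree2025}. Since $\delta=0$ and there is a single linkage class, every positive stationary state satisfies $\mathbf K\boldsymbol\psi=0$ with $\ker\mathbf K$ one-dimensional. Hence $\boldsymbol\psi=\boldsymbol\tau(r_\pm)/Z$ for some scalar $Z>0$. This $Z$ depends on $r_\pm$, on $\mathbf Y$, on the conservation laws and on the initial state, but it is common to all complexes.

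Any relation that is linear and homogeneous in $\boldsymbol\psi$ therefore reduces to the same relation for $\boldsymbol\tau$, and $Z$ drops out. This is the key reason the coefficients cannot depend on $\mathbf Y$, the conserved quantities or the initial condition: the tree polynomials are built only from the rate constants on the complex graph.

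Next we use the decomposition \eqref{eq:tau_decomp}. For each complex $c$, let $\mathbf u_c=(\tau_c^{\backslash r_\pm},\tau_c^{\backslash r_-\sslash r_+},\tau_c^{\backslash r_+\sslash r_-})\in\mathbb R^3$. Then $\tau_c(r_\pm)=\mathbf u_c\cdot\mathbf v$ with $\mathbf v=(1,r_+,r_-)$, and the $\mathbf u_c$ are independent of $r_\pm$. Because the controlled reaction is not a bridge, coplanarity gives a nonzero normal vector $\mathbf n$, independent of $r_\pm$, with $\mathbf n\cdot\mathbf u_c=0$ for all $c$. We then define
\[
\Delta_{ab}=\mathbf m\cdot(\mathbf u_a\times\mathbf u_b),
\]
where $\mathbf m$ is any fixed vector with $\mathbf m\cdot\mathbf n\neq0$. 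Equivalently, $\Delta_{ab}$ is the $2\times2$ minor of the pair $(\mathbf u_a,\mathbf u_b)$ in coordinates on the common plane. The $\Delta$'s are then polynomials in the uncontrolled rate constants only.

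The identity follows from elementary linear algebra. The vectors $\mathbf u_a,\mathbf u_b,\mathbf u_c$ all lie in a two-dimensional plane, so the cyclic sum $\Delta_{bc}\mathbf u_a+\Delta_{ca}\mathbf u_b+\Delta_{ab}\mathbf u_c$ vanishes. This is the Grassmann–Plücker (Cramer) identity for three vectors in a plane, which can be checked by writing each $\mathbf u$ in a basis of the plane and expanding. Dotting the identity with $\mathbf v$ gives $\Delta_{bc}\tau_a+\Delta_{ca}\tau_b+\Delta_{ab}\tau_c=0$ for all $r_\pm$. Dividing by $Z$ yields \eqref{eq:mutuallinearity}.

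A cleaner, basis-free route avoids choosing $\mathbf m$. Take $\mathbf m=\mathbf n$: for $\mathbf u_a,\mathbf u_b$ in the plane, $\mathbf u_a\times\mathbf u_b$ is parallel to $\mathbf n$. Then $\Delta_{ab}=\mathbf n\cdot(\mathbf u_a\times\mathbf u_b)$, and the cyclic identity follows from the vector identity $(\mathbf u_b\times\mathbf u_c)\mathbf u_a+\dots$ restricted to the plane.

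The main obstacle is the coplanarity input itself. We take it from Ref.~\cite{polettiniCoplanarityRootedSpanningtree2025}, but we need to check two things: that it applies to a general weighted Laplacian on complexes, which holds since the MCTT is purely graph-theoretic, and that $\mathbf n$ can be chosen independent of $r_\pm$, which is automatic since the $\mathbf u_c$ are. The natural candidate, following the source–sink picture, is a combination $\boldsymbol\tau^{\backslash r_\pm}$ proportional to $\alpha\boldsymbol\tau^{\backslash r_-\sslash r_+}+\beta\boldsymbol\tau^{\backslash r_+\sslash r_-}$. Here $\alpha,\beta$ are built from tree sums on the graph with the controlled edge removed, rooted at its two endpoints.

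We should also state that the relation is nontrivial generically, meaning the $\Delta$'s are not all zero. This requires the $\mathbf u_c$ to span the plane rather than collapse onto a line; this is a generic-rates statement we would note rather than prove.
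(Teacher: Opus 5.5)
Your proof is correct and is the paper's argument in cross-product notation. With $\mathbf m=\mathbf e_3$, your $\Delta_{ab}=\mathbf m\cdot(\mathbf u_a\times\mathbf u_b)$ equals the paper's third row of $\mathrm{adj}(\M_{abc})$ divided by $r_+$, namely $(\tau_a^{\backslash r_\pm}\tau_b^{\backslash r_-\sslash r_+}-\tau_b^{\backslash r_\pm}\tau_a^{\backslash r_-\sslash r_+})/r_+$; your cyclic identity is the third row of $\mathrm{adj}(\M_{abc})\,\M_{abc}=\det(\M_{abc})\,\mathbf I=\mathbf 0$, dotting with $(1,r_+,r_-)$ is the paper's summing of the three columns, and any other admissible $\mathbf m$ only rescales all $\Delta$'s by the common factor $\mathbf m\cdot\mathbf n$.
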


\begin{proof}
By coplanarity, the matrix $\M \coloneqq ( \boldsymbol \tau^{\backslash r_\pm}, \boldsymbol\tau^{\backslash r_- \sslash r_+}, \boldsymbol\tau^{\backslash r_+ \sslash r_-} )$ has rank at most two, and so does the square submatrix $\M_{abc}$ collecting the rows of $a$, $b$, $c$; thus $\det \M_{abc} = 0$. From $\mathrm{adj} (\M_{abc}) \M_{abc} = \det (\M_{abc}) \mathbf I = \mathbf 0$, every row of the adjugate is orthogonal to the three columns of $\M_{abc}$. The third row collects the cofactors along the column carrying $\boldsymbol\tau^{\backslash r_+\sslash r_-}$, which therefore drops out; divided by $r_+$, it is $\boldsymbol \lambda \coloneqq ( \Delta_{bc}, \Delta_{ca}, \Delta_{ab} )$, with $\Delta_{ab} = ( \tau_a^{\backslash r_\pm} \tau_b^{\backslash r_- \sslash r_+} - \tau_b^{\backslash r_\pm} \tau_a^{\backslash r_- \sslash r_+}) /r_+ = -\Delta_{ba}$ independent of $r_\pm$. Being orthogonal to each column, $\boldsymbol\lambda$ is orthogonal to their sum $(\tau_a, \tau_b, \tau_c)$, hence by Eq.~\eqref{eq:MCTT} to $(\psi_a, \psi_b, \psi_c)$, which is Eq.~\eqref{eq:mutuallinearity}. Finally, $\mathbf Y$, the conservation laws and the initial state enter $\boldsymbol\psi$ only through the common factor $Z$, which cancels from this homogeneous relation.
\end{proof}

Every three activities thus trace a plane as the controlled reaction changes, and the content of the Theorem is graded by its coefficients: \textit{(i)} when no coefficient vanishes, the mutual linearity is irreducibly three-term and two activities determine the third; \textit{(ii)} when one does, say $\Delta_{ab}$, the pair it joins is rigidly proportional, $\psi_a/\psi_b = \tau_a^{\backslash r_\pm}/\tau_b^{\backslash r_\pm}$ for all $r_\pm$, and the third complex decouples from them. Geometrically, $\boldsymbol\lambda$ is the normal to the plane confining $(\psi_a,\psi_b,\psi_c)$; a vanishing $\Delta_{ab}$ makes this plane contain the $\psi_c$ axis, so that $(\psi_a, \psi_b)$ stays on a fixed line while $\psi_c$ is unconstrained.


\textit{Generic linear combinations of the complexes' activities.} Let $\mathcal O_i \coloneqq \boldsymbol o_i \cdot \boldsymbol \psi$ be any observable that is linear in the activities, and whose weights $\boldsymbol o_i$ do not depend on the controlled rate constants $r_\pm$. The product of a matrix whose rows are $\boldsymbol o_1$, $\boldsymbol o_2$ and $\boldsymbol o_3$, with the one whose columns are $\boldsymbol \tau^{\backslash r_\pm}$, $\boldsymbol \tau^{\backslash r_- \sslash r_+}$ and $\boldsymbol \tau^{\backslash r_+ \sslash r_-}$ has rank at most 2, since the second matrix has rank at most 2 from the arguments above. Following the same procedure of taking the third row of the adjugate matrix, which will be orthogonal to the vectors composing each observable, we get
\begin{equation}\label{eq:observable}
    \Delta(\boldsymbol o_2, \boldsymbol o_3) \mathcal O_1 + \Delta(\boldsymbol o_3, \boldsymbol o_1) \mathcal O_2 + \Delta(\boldsymbol o_1, \boldsymbol o_2) \mathcal O_3 = 0 ,
\end{equation}
where $\Delta(\boldsymbol o_1, \boldsymbol o_2) \coloneqq \sum_{c, c'} o_1(c) o_2(c') \Delta_{cc'} = - \Delta(\boldsymbol o_2, \boldsymbol o_1)$ are independent of the controlled rate constants $r_\pm$. This equation shows that such observables satisfy a mutual linearity relation.

The Theorem is the special case where observables are defined by the canonical vectors $\boldsymbol o_1, \boldsymbol o_2, \boldsymbol o_3 = \mathbf e_a, \mathbf e_b, \mathbf e_c$. Many observables fall in this class, such as weighted sums of complexes' activities, dynamical activity of some reactions, fluxes to reservoirs and their chemical power. Also, reaction currents satisfy this relation: the reaction current $j_{m\to n}=r_{m\to n}\psi_m - r_{n\to m}\psi_n$ can be expressed as this type of observable for $\boldsymbol o = r_{m\to n} \mathbf e_m - r_{n\to m} \mathbf e_n$; therefore, for any three currents,
\begin{equation}\label{eq:currentlinearity}
  \Delta(\boldsymbol o_2, \boldsymbol o_3) j_1 + \Delta(\boldsymbol o_3, \boldsymbol o_1) j_2 + \Delta(\boldsymbol o_1, \boldsymbol o_2) j_3 = 0.
\end{equation}
Equation~\eqref{eq:currentlinearity} applies directly only to uncontrolled reactions: the current of the controlled reaction, $j_r = r_+\psi_m - r_-\psi_n$ with $m$ and $n$ its endpoints, has weights $\boldsymbol o_r = r_+\mathbf e_m - r_-\mathbf e_n$ that depend on $r_\pm$. It can nevertheless be included. Complex balancing at $m$, $(\K\boldsymbol\psi)_m = 0$, states that the currents into $m$ sum to zero, so $j_r$ equals the sum of the currents of the other reactions incident on $m$, each oriented towards $m$. Such reactions exist, since a reaction that is not a bridge lies on a cycle. Replacing $\boldsymbol o_r$ by the corresponding sum of uncontrolled weights, the controlled current also obeys Eq.~\eqref{eq:currentlinearity} and may serve as one of the three. Notice that the same trick does not work for observables other than currents. For example, exactly as in jump processes~\cite{bebonMutualLinearityGeneric2026a}, the full dynamical activity retains a genuine dependence on the control due to $r_+\psi_m + r_-\psi_n$. Some thermodynamic quantities such as affinities and entropy production are not linear in the activities, therefore they do not follow this relation, reinforcing that mutual linearity is a purely kinetic fact.

Operationally, two measurements of any two such observables determine the third, however unrelated they look: an activity, a one-way flux and a chemical power are constrained to a common plane through the origin as the control is swept. Take the example of a unicyclic network, and consider two currents $j_1$ and $j_2$, and one activity $\psi_c$. It can be shown that $ \Delta( \boldsymbol o_1, \boldsymbol o_2) = 0$ and $\Delta ( \mathbf e_c, \boldsymbol o_1) = \Delta ( \mathbf e_c, \boldsymbol o_2)$. Substituting these in Eq.~\eqref{eq:observable}, we get $j_1 = j_2$ for any value of the controlled rate constants, which is a known consequence of its topology.

\textit{Stochastic dynamics.} So far the discussion has focused on deterministic mass-action kinetics. The conclusion also holds for stochastic mass-action kinetics. By the Anderson-Craciun-Kurtz theorem~\cite{anderson2010product}, for zero-deficiency networks the stationary probability for the species is a product-form Poisson-like distribution. This entails that the mean stochastic stationary currents are identical to the deterministic ones~\cite{polettini2015dissipation}, so they obey Eq.~\eqref{eq:currentlinearity} with the same coefficients.

\textit{Chemostatted species.} In open networks, a species is exchanged with a reservoir through the null complex $\varnothing$, which carries unit activity $\psi_\varnothing=1$~\cite{Cengio2022}. The proof of the Theorem holds by replacing $c$ with $\varnothing$, and due to the vanishing determinant of matrix $\M_{ab\varnothing}$, we obtain
\begin{equation}\label{eq:affine}
	\Delta_{b\varnothing} \psi_a+ \Delta_{\varnothing a} \psi_b + \Delta_{ab} = 0,
\end{equation}
a linear affine relation between any two activities, meaning that measuring a single reference activity $\psi_b$ suffices to determine $\psi_a$ for any complex $a$ such that $\Delta_{b\varnothing} \neq 0$.

\textit{Jump processes.} When every complex is a single species, $\mathbf{Y}$ is the identity and Eq.~\eqref{eq:dynamics} is the master equation of a jump process, whose activities $\psi_c \to p_c$ are state probabilities. The algebra behind Eq.~\eqref{eq:observable} is unchanged, and jump processes have an additional ingredient: normalization, $\mathbf 1 \cdot \boldsymbol p = 1$. This is itself an observable with rate-independent weights, so we may choose $\boldsymbol o_3 = \mathbf 1$. With $\boldsymbol o_1, \boldsymbol o_2 = \mathbf e_a, \mathbf e_b$, Eq.~\eqref{eq:observable} becomes
\begin{equation}\label{eq:markov}
    \Delta_{b\mathbf 1} p_a + \Delta_{\mathbf 1 a} p_b + \Delta_{ab} = 0, \qquad \Delta_{a \mathbf 1} \coloneqq \sum_c \Delta_{ac} = -\Delta_{\mathbf 1 a},
\end{equation}
an affine relation between any two state probabilities, which recovers the mutual linearity of Ref.~\cite{bebonMutualLinearityGeneric2026a}. Using this result, it also recovers the mutual linearity of currents of Ref.~\cite{harunariMutualLinearityNonequilibrium2024}. Normalization is typically absent in CRNs, and it is what reduces the three-term law to a two-term affine one.

\textit{Bounds on activity ratios.} Choosing $c = m$, the source of $r_+$, in Eq.~\eqref{eq:mutuallinearity} and dividing by $\Delta_{bm}\psi_b$ gives
\begin{equation}\label{eq:ratio}
    \frac{\psi_a}{\psi_b} = \frac{\Delta_{ma}}{\Delta_{mb}} + \frac{\Delta_{ab}}{\Delta_{mb}} \frac{\psi_m}{\psi_b},
\end{equation}
and similarly for $c=n$, the source of $r_-$. Since $\psi_m/\psi_b \ge 0$, $\Delta_{ma}/\Delta_{mb}$ is either a lower or upper bound for $\psi_a / \psi_b$ depending on the sign of $\Delta_{ab} / \Delta_{mb}$. It can be shown that $\Delta_{mb} \geq 0$ and $\Delta_{nb} \leq 0$~\footnote{A tree rooted at $m$ cannot contain $r_+$, so $\tau^{\backslash r_- \sslash r_+}_m = 0$ and $\Delta_{mb} = \tau^{\backslash r_\pm}_m \tau^{\backslash r_- \sslash r_+}_b / r_+ \geq 0$. Likewise $\tau^{\backslash r_+ \sslash r_-}_n = 0$. Writing the coplanarity relation as $\boldsymbol\tau^{\backslash r_\pm} = x \boldsymbol\tau^{\backslash r_+ \sslash r_-} + y \boldsymbol\tau^{\backslash r_- \sslash r_+}$, its $m$ component gives $x = \tau^{\backslash r_\pm}_m / \tau^{\backslash r_+ \sslash r_-}_m > 0$, and its $n$ component gives $\Delta_{nb} = -x  \tau^{\backslash r_+ \sslash r_-}_b \tau^{\backslash r_- \sslash r_+}_n / r_+ \leq 0$}, therefore
\begin{equation}\label{eq:bounds}
    \min \left\lbrace \frac{\Delta_{ma}}{\Delta_{mb}}, \frac{\Delta_{na}}{\Delta_{nb}} \right\rbrace \le \frac{\psi_a}{\psi_b} \le \max \left\lbrace \frac{\Delta_{ma}}{\Delta_{mb}}, \frac{\Delta_{na}}{\Delta_{nb}} \right\rbrace ,
\end{equation}
independently of $r_\pm$. The bounds are tight: $r_+\to\infty$ drains the source $m$, so $\psi_m/\psi_b \to 0$ and the ratio tends to $\Delta_{ma}/\Delta_{mb}$, while $r_-\to\infty$ drains $n$ and the ratio goes to $\Delta_{na}/\Delta_{nb}$. The sign of $\Delta_{ab}$ decides which bound is the lower one, but if $\Delta_{ab}=0$ the interval collapses onto a fixed ratio and the third complex decouples. The bounds thus follow from mutual linearity itself, and delimit the values that the ratio of two activities can take as the controlled reaction is driven over its whole range. Since the ratio is continuous in $r_\pm$ and approaches both bounds, every value strictly between them is attainable. This answers the reachability question of the introduction: an activity ratio can be brought to a target value by retuning the controlled reaction alone if and only if the target lies inside this interval.

Reference~\cite{liangThermodynamicSpaceChemical2026} delimits the stationary concentrations accessible to a network on a given thermodynamic budget, including activity ratios at fixed thermodynamic driving. Meanwhile, Eq.~\eqref{eq:bounds} holds for fixed kinetics apart from the single controlled reaction, which can change thermodynamic quantities. The two meet at constant $r_+/r_-$, which preserves the driving force. For jump processes, Ref.~\cite{bebonMutualLinearityGeneric2026a} provides asymptotic bounds on stationary probabilities that come from the constriction ensembles of Eq.~\eqref{eq:tau_decomp}; in this case, normalization is what allows each probability to be bounded on its own.


\textit{A textbook case.} Consider the reversible Michaelis--Menten mechanism, closed into a cycle by an enzyme-catalysed conversion:
\begin{equation}\label{eq:MM}
  E+S \xrightleftharpoons[k_{-1}]{k_{+1}} ES \xrightleftharpoons[k_{-2}]{k_{+2}} E+P \xrightleftharpoons[k_{-3}]{k_{+3}} E+S .
\end{equation}
The complexes form a single linkage class, all reactions are bidirectional, and the deficiency vanishes. Crucially, two of them are genuinely bimolecular, so $\mathbf{Y}$ is not the identity and the network is not a Markov chain in disguise, it carries genuine CRN nonlinearity features. The first reaction $E+S\rightleftharpoons ES$, representing the binding of enzyme and substrate, is not a bridge: deleting it keeps all complexes connected. All hypotheses are thus met, and by controlling the rate constants of one reaction, mutual linearity provides an exact relation for the stationary activities of all three complexes (the monomials $[E][S]$, $[ES]$, and $[E][P]$) with coefficients independent of the controlled rate constants. The statement is unchanged when further intermediates are resolved, as in the four-complex cycle $E+S\rightleftharpoons ES \rightleftharpoons EP \rightleftharpoons E+P\rightleftharpoons E+S$, which preserves zero deficiency.

\begin{figure}
    \raisebox{-\height}{%
  \begin{overpic}[width=.24\textwidth, clip, trim=5 5 0 0]{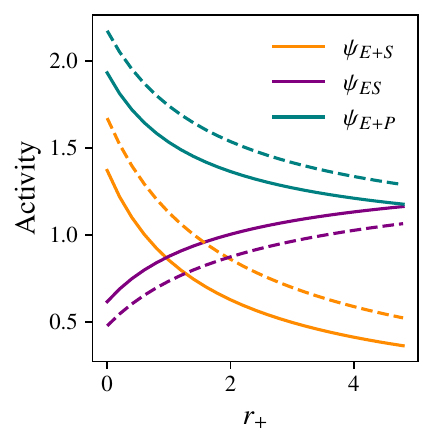}
      \put(0,95){(a)}
  \end{overpic}%
}%
\raisebox{-\height}{%
  \begin{overpic}[width=.24\textwidth, clip, trim=5 5 0 0]{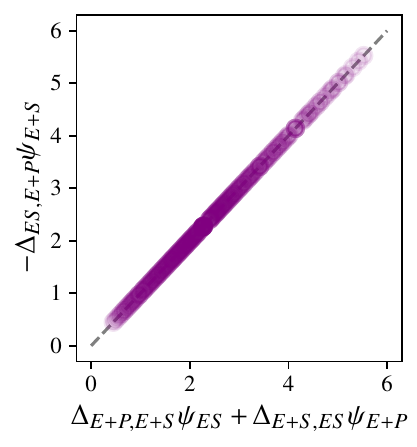}
      \put(0,95){(b)}
  \end{overpic}%
}
    \caption{\textit{Reversible Michaelis-Menten.} (a) Activities of all complexes as function of the perturbed rate constant $r_+$, with $r_- = 1$ in solid lines and 2 in dashed lines. (b) Products of $\Delta$s and $\psi$s that should be equal due to mutual linearity, sampled in a 2D grid of $r_\pm \in [0,5]$; the dashed line is $y=x$. The initial concentration of all species is 1, $( k_{+2}, k_{-2}, k_{+3}, k_{-3}) = (1.2, 0.7, 1.1, 2)$, and $k_{\pm 1} = r_\pm$ are controlled.}
    \label{fig:MM}
\end{figure}

Figure~\ref{fig:MM} illustrates the Theorem on the reversible Michaelis--Menten cycle. Controlling the first reaction, we sweep its rate constants $k_\pm = r_\pm$ into the regime where the three complexes' activities respond non-linearly, cf. Fig.~\ref{fig:MM}(a). If we plot $-\Delta_{bc} \psi_a$ versus $\Delta_{ca}\psi_b + \Delta_{ab} \psi_c$ it collapses to the line $y=x$ across the whole two-parameter family to machine precision, illustrating mutual linearity, cf. Fig.~\ref{fig:MM}(b).

\textit{A richer space of complexes.} Consider a nonlinear network with complexes $2A,2B,2C,2D,2E$ joined in a single linkage class with three independent cycles,
\begin{eqnarray}\label{eq:toy}
    2A \rightleftharpoons2B \rightleftharpoons 2C \rightleftharpoons 2D \rightleftharpoons 2E \rightleftharpoons 2A, \nonumber\\
    2A \rightleftharpoons 2C, \quad 2B \rightleftharpoons 2D .
\end{eqnarray}
This example has zero deficiency and is irreducible, satisfying the conditions for mutual linearity of activities. Every complex is bimolecular, $\psi_{2X}=[X]^2$, so the kinetics is nonlinear. We verify in Fig.~\ref{fig:2A2B2C2D2E} the mutual linearity relation for complexes $\{ 2A, 2B,2C\}$ and $\{ 2A, 2B, 2E\}$, when controlling $2C \rightleftharpoons 2D$. The bound of Eq.~\eqref{eq:bounds} is represented by stars, and is verified. One interesting aspect of this example is that the composition matrix $\mathbf Y$ is twice the identity, so it is invertible, and the mutual linearity of activities directly implies a quadratic equation for every species concentration: $\Delta_{bc} [A]^2 + \Delta_{ca} [B]^2 + \Delta_{ab} [C]^2 = 0$, and similarly for any other permutation of species. In the inset of Fig.~\ref{fig:2A2B2C2D2E}, we take the same CRN but replace $2E$ by $A+B$, which decreases the rank of the stoichiometric matrix by one, and the deficiency is $\delta = 1$; in this case, the relation is violated. Since this network has a single linkage class, the violation can only be due to the nonzero deficiency, and mutual linearity thus doubles as a data-driven test of deficiency.

\begin{figure}
    \centering
    \includegraphics[width=\linewidth]{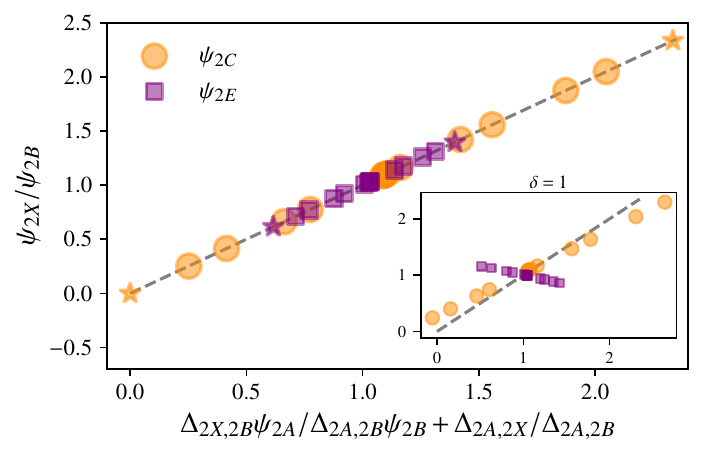}
    \caption{Mutual linearity witnessed in the ratio of $2C$ (orange circles) and $2E$ (purple squares) with respect to $2B$. The dashed gray line is $y=x$, and the stars represent the bounds from Eq.~\eqref{eq:bounds}. The rate constants of $2C \rightleftharpoons 2D$ are swept from 0 to 10; the initial concentrations are respectively $\{1, 2, 3, 1.1, 2.4 \}$; the rate constants are $k_{\pm 1} = \{ 1.2, 1 \}$, $k_{\pm 2} = \{ 1.3, 1.1 \}$, $k_{\pm 4} = \{ 1.5, 1.3 \}$, $k_{\pm 5} = \{ 1.6, 1.4 \}$, $k_{\pm 6} = \{ 1.7, 1.5 \}$, and $k_{\pm 7} = \{ 1.8, 1.6 \}$. The inset has complex $2E$ replaced by $A+B$: same reaction topology, same rate constants, but deficiency $\delta = 1$.}
    \label{fig:2A2B2C2D2E}
\end{figure}


\textit{Outlook.} Because the coefficients $\Delta$ are independent of the control, mutual linearity is a ready tool for inference and model discrimination, and the coefficients can be obtained in two ways. They can be calculated analytically from the spanning-tree polynomials, but this can become a daunting task: the number of spanning trees typically grows exponentially with network size, some reactions may be hidden, or one might simply not be interested in learning about conditioned rooted spanning-tree polynomials. In such cases, they can be estimated empirically. Since Eq.~\eqref{eq:mutuallinearity} is homogeneous, two triples $(\psi_a,\psi_b,\psi_c)$ measured at different control settings fix the coefficients up to a common factor, and any further triple tests the relation. Either way, once the coefficients are known, $\psi_a$ and $\psi_b$ predict $\psi_c$, and a measured departure from coplanarity signals nonzero deficiency or multiple linkage classes. Deficiency, a structural invariant normally read off the full network, thereby becomes testable from steady-state measurements of activities. Connecting these coefficients to the thermodynamic cost of driving~\cite{polettiniEspositoIrreversible2014, raoEspositoNonequilibrium2016}, and to the deficiency-controlled phenomena of catalysis and kinetic invertibility~\cite{marehalliDeficiencyKinetic2023}, is a natural next step. Likewise, other forms of linearity might hold for the nonzero deficiency case, and we leave it for future contributions.

Although mutual linearity in CRNs is a statement about complexes' activities, it also constrains the underlying species. Since $\psi_c = \prod_i x_i^{Y_{ic}}$, Eq.~\eqref{eq:mutuallinearity} is a polynomial equation in $\mathbf x$ with coefficients independent of the control. As the controlled rate constants are swept, the stationary concentrations therefore remain on the solutions of these equations that are compatible with the conservation laws, which add linear constraints fixed by the initial state. Polynomial equations combined with linear constraints are hard to characterize in general, so a systematic account of what mutual linearity implies for species concentrations is an interesting direction.

\begin{acknowledgments}
{\emph{Acknowledgments}.} We thank Sara Dal Cengio, Vivien Lecomte, Shesha Gopal Marehalli Srinivas, and Efe Ilker for useful discussions. PH acknowledges funding from the European Union (ERC-SuperStoc-101117322). GF was supported by the U.S. Department of Energy (DOE), Office of Science, Basic Energy Sciences (BES) under Award No. DE-SC0025516.

{\emph{AI Statement}.} Claude (Anthropic, Opus 5 and 5.5) was used, under the authors' interactive direction, for aid with exploration suggestions, verifying results, numerical computation, and text editing. All results were independently obtained by the authors, who take full responsibility for the content.
\end{acknowledgments}

\bibliography{refs}

\end{document}